
\documentclass[12pt]{article}
\usepackage{amssymb,amsmath,xcolor,graphicx,xspace,colortbl,rotating}
\usepackage{amsmath}
\usepackage{amssymb}
\usepackage{amssymb,amsmath,xcolor,graphicx,xspace,colortbl,rotating}
\usepackage{graphicx}

\setcounter{MaxMatrixCols}{10}

\graphicspath{{Trust_and_group formation BC 030820sbv2_graphics/}{Trust_and_group formation BC 030820sbv2_tcache/}{Trust_and_group formation BC 030820sbv2_gcache/}}
\DeclareGraphicsExtensions{.pdf,.eps,.ps,.png,.jpg,.jpeg}
\graphicspath{{Trust_and group formation BC 030820sb_graphics/}{Trust_and group formation BC 030820sb_tcache/}{Trust_and group formation BC 030820sb_gcache/}}
\DeclareGraphicsExtensions{.pdf,.eps,.ps,.png,.jpg,.jpeg}

\newtheorem{proposition}{{\bf \sc Proposition}}
\newtheorem{remark}{{\bf \sc Remark}}

\setlength{\marginparwidth}{0in}
\setlength{\marginparsep}{0in}
\setlength{\oddsidemargin}{0in}
\setlength{\evensidemargin}{-0.1in}
\setlength{\textwidth}{6.5in}
\setlength{\topmargin}{-.5in}
\setlength{\textheight}{9in}

\newenvironment{proof}[1][Proof]{\noindent\textbf{#1.} }{\ \rule{0.5em}{0.5em}}
\bibliographystyle{abbrv}

\begin{document}

\title{Pricing group membership\thanks{
We thank Kaustav Das, Aditya Goenka and seminar participants at the
University of Birmingham for comments. }}
\author{Siddhartha Bandyopadhyay\thanks{
s.bandyopadhyay@bham.ac.uk }, \\
\relax  Antonio Cabrales\thanks{
a.cabrales@ucl.ac.uk }}
\date{August 2020}
\maketitle

\begin{abstract}
We consider a model where agents differ in their `types' which determines
their voluntary contribution towards a public good. We analyze what the
equilibrium composition of groups are under centralized and centralized
choice. We show that there exists a top-down sorting equilibrium i.e. an
equilibrium where there exists a set of prices which leads to groups that
can be ordered by level of types, with the first k types in the group with
the highest price and so on. This exists both under decentralized and
centralized choosing. We also analyze the model with endogenous group size
and examine under what conditions is top-down sorting socially efficient. We
illustrate when integration (i.e. mixing types so that each group's average
type if the same) is socially better than top-down sorting. Finally, we show
that top down sorting is efficient even when groups compete among themselves.

\noindent \textbf{JEL Classification}: D02, D64, D71, H41

\noindent \textbf{Keywords}: Top down sorting, Group formation, Public good,
Segregation, Integration.
\end{abstract}

\newpage

\section{Introduction}

In this paper we analyze whether groups can `price' group membership to
screen who becomes a member. Joining many groups and clubs is not costless.
Many of them require monetary fees, or non-monetary `sacrifices'\ to become
a member Winslow (1999). Examples abound, new members of a criminal gang or
a terrorist organization have to undergo hazing rituals (Vigil, 1996).
Similarly, new recruits into many military units and college fraternities
also have to go through painful or shameful activities to be accepted into
them (Ostvik and Rudmin 2001, Mercuro, Merritt and Fiumefreddo 2014, De
Klerk 2013, Groves, Griggs and Leflay 2012, Keating et al 2005). Even
religious groups have elaborate initiation rites (Berman 2000, Iannacone
1992). Exclusive clubs, or elite schools require the payment of very
expensive fees to become members (Jenkins, Micklewright and Schnepf 2008).
The question of interest in this paper is the rationale for these practices,
the equilibrium value of such fees, and their welfare implications.

In all those groups, the value of being a member depends on the extent to
which other participants contribute to the common cause or public good they
provide. At the same time, individuals differ in their inclination and
ability to dedicate themselves to the cause, i.e. to contribute to the
public good provided by the group. For this reason, the utility obtained by
any member in belonging to a group will depend on the types of all the other
members. As a consequence, the group will be interested in making sure that
those who gain entry as members are of the right type. But these types are
not necessarily observable, and the entry fee is a way to select members in
an incentive compatible way.

Our model assumes there are a number of participants in a game which has two
stages. The first stage develops as follows. A set of \ `entrepreneurs', one
for each of a fixed number of groups, posts a price for belonging to their
group. The participants then decide, independently and simultaneously,
whether to pay the price to one of those groups, or none at all. In the
second stage, the participants who are in a group decide on their level of
effort/contribution to a public good for the group to which they belong.

The participants differ in the way they benefit from the public good. Each
individual is characterized by a type that is a multiple of the amount of
the public good. This heterogeneity can be variously interpreted as either a
different intrinsic personal enjoyment of the good, or a difference in the
degree of altruism (i.e. some individuals internalize the benefit of the
public good on others to a larger extent). One important implication of this
assumption is that the group members do not care care directly about the
types of others, or about the group size. This is because we assume there is
no direct externality caused by others' types or the number of individuals
in the group. Individuals do care about the actions taken by other group
members, because those actions affect the amount of public good provided
which they enjoy, and contributions to the public good are indeed affected
by their types. This assumption distinguishes our model from others in the
literature of club games (starting with Buchanan, 1965)\ and congestion
games (Oakland 1972, Baumol and Oates 1988).

Clearly, since the groups provide a public good, we would in general have an
underprovision of the public good in the group if the provision is voluntary
and decided on an individual basis. But many of the groups that we have used
to motivate our model have the possibility of imposing a contribution level
within the group. We can think for example of the tightly hierarchical
organization of most armies, and even gangs. For this reason, we also study
group formation when contributions within groups are decided by a planner
within the group.

Whether the individual or the group leader (social planner) decides the
contribution level, our first result is that in both cases we can establish
the existence of a \emph{top-down sorting equilibrium. }This equilibrium has
the following characteristic. The `entrepreneurs' announce a list of
distinct prices, which thus can be ordered from highest to lowest. The
participants sort themselves into groups by their types. A set composed of
the highest types chooses to belong to the group with the highest price.
Another set of types just below the first chooses to belong to the group
with the second highes price. This continues until the last set of types,
who belong to a group with price zero.

We analyze the equilibrium under both centralized and decentralized
decision-making. One difference between the equilibrium under centralized
and decentralized choice of effort is that under decentralized effort choice
every individual gains whenever the average type of the group increases
(although relatively higher types gain more, which is the basis for
segregation). On the other hand, with centralized effort choice within the
group an increase in average type improves the utility of above average
types within the group, but decreases for below average types. This happens
because the group planner does not know the individual types within the
group, and so she requires the same level of contribution from every one.

We next turn our attention to some welfare properties of the decentralized
process for group formation. For this we need to take into account both
group composition and group size. In terms of group composition, we compare
the welfare of the \emph{top-down sorting equilibrium }with the one arising
from an equilibrium in which there is no sorting, and all groups have equal
average types. It turns out that the answer depends on the curvature of the
output function. For example, when the concavity is such the risk aversion
parameter in the CRRA function is bigger than one (i.e. $\alpha >1$), then
the \emph{top-down sorting equilibrium }is less socially beneficial than one
where the population is sorted accross groups so that all groups have the
same average type. We call this `integration'. A corollary of this result is
that in this case, the segregation occurring in equilibrium is socially
inefficient. On the other hand, when $\alpha <1$, then the \emph{top-down
sorting equilibrium }has a larger total welfare than the one achieved when
groups are `integrated' and have the same average types.

In terms of size, for $\alpha >1$, the optimal size of the group is $N\text{,%
}$ the whole population, since individual payoff increases with the size of
the coalition. For $\alpha <1$, the optimal size of the group has to trade
off the benefits from segregation (which are highest if average group
quality increases, say by dropping the lowest types in the group), with
those arising from having larger groups, because group size affects
equilibrium contribution levels.

Finally, up to now we have assumed that there is no inter-group conflict.
However, in some of our applications the groups compete with one another
after they are formed (again, it is easier to think of armies and gangs, but
even educational institutions compete ex post for top-paying jobs). We show
that our \emph{top-down sorting equilibrium }also exists in this context, so
our conclusions are robust to this kind of setting.

Our paper has important implications for policy. There are relevant
environments ($\alpha >1$)\ where the social planner would like an
`integrated' society, which might not arise in equilibrium. She wants to
achieve this purely on social surplus maximization grounds. This provides a
new rationale for integration in various social domains, like education and
housing, without having to resort to preference about equality. We will
discuss this in more depth in the conclusion.

\subsection{Literature}

Our paper develops a novel theory of group formation. Clearly, this paper
links to the classical literature on club theory (Buchanan 1965, Berglas
1976). This literature does not consider differences in information about
preferences, and therefore it does not provide a rationale for undertaking
actions to signal preferences. Somewhat closer to our work is Ben Porath and
Dekel (1992), who use the potential for self sacrifice as a way to signal
future intentions. However, they do it because of equilibrium selection
problems, and the self sacrifice does not actually occur in equilibrium.
Helsley and Strange (1991), in turn, model club formation in a context with
homogeneous tastes and costs. In their model fees and prices are used to
obtain second best usage when congestion within a club is a problem.

In terms of theory, two close papers to our's are Jaramillo and Moizeau
(2002), and Cornes and Silva (2013). Jaramillo and Moizeau (2003) study a
model where individuals differ in income, and higher income people desire a
different level of the local public good. Since information about income is
private, individuals use costly signalling to join a club with others who
have similar income (and, hence, preferences for the local public good).
Different from us, the group formation is uncoordinated (there is no
entrepreneur creating the clubs), there are only two types, and they do not
contemplate the possibility of internal coordination of contributions. Also,
in their model there is no reason for individuals of different types to
group together in the social optimum. In Cornes and Silva (2013) some people
love prestige (they have higher utility from contributing relative to the
average) and others are purists. Fees serve to sort them into clubs. The
model, unlike ours has both positive and negative externalities, but the
competitive equilibrium is inefficient. Competition yields co-ordination
benefits with the formation of prestige clubs.

Our paper also has some connection to a literature that is related to group
formation in contexts were types are differentiated horizontally. For
example, Levy and Razin (2012) analyze explicit displays of religious
beliefs and cooperation within the religious group. Baccara and Yariv
(2013), also analyze group formation and contribution to horizontally
differentiated tasks.

Another relevant literature for us relates to assortative matching. Durlauf
and Seshadri (2003) examines when assortative matching is efficient. Legros
and Newman (2007) study sufficient conditions for assortative matching in
equilibrium. Hoppe, Moldovanu and Sela (2008) analyze when costly signaling
is necessary for assortative matching under incomplete information.
Importantly, they check when gains form assortative matching are offset by
signalling costs. There is also a large literature studying sorting into
schools, to take advantage of peer effects (Epple Romano 1998, Cullen,
Jacob, and Levitt 2003, Hsieh and Urquiola 2006 or MacLeod and Urquiola
2015).

There is a body of \ work in the experimental literature exploring how
endogenous sorting into groups can help solve the problem of free riding.
For example, costly rituals have been shown to promote greater co-operation
(Sosis 2004, Ruffle and Sosis 2007). Page, Putterman and Unel (2005) show
that endogenous segregation helps avoid free-riding and provides some
support for the efficiency of top down sorting. Cimino (2011) does a survey
based experiment where participants are asked about an initiation activity
for a group, which itself provides benefits to members. Those participants
randomly allocated to a group providing a good with a more public component
were significantly more likely to choose a stressful initiation task. This
sort of hazing can be considered as an entry fee to prevent free riding.
This would be very much in the spirit of our model. Aimone et al. (2013)
show that something akin to our endogenous segregation equilibrium happens
in the lab. In their experiment participants can choose to participate in a
voluntary contribution public good game in groups having different rates of
return. People that are more prone to contribute to public goods join those
groups with lower rates of return, thus signalling their type. As in our
model, a costly choice (in this case, choosing a less efficient technology)
is a credible signal of a type wishing to contribute more to the public good.

In addition to experimental results, there is a significant ethnographic
evidence that is connected to our issues. For example, Vigil (1996) analyzes
gang initiation. This study is consistent with our model, i.e. initiation
rites are used to screen potential members. Sosis, Kress and Boster (2007)
show the importance of such costly male rituals in signalling commitment and
promoting solidarity among men who need to organise themselves for warfare.
Soler (2012) rationalises the existence of wasteful religious rituals. It
analyzes the practice of Candomble, an Afro Brazilian ritual. It shows that
participation is correlated with higher contributions to public goods, which
is consistent with our model. Cleaver (2004) shows that entry fees are used
in clubs to exclude the ``non elite'' .

\section{Model: signaling and coalition formation}

This model has two stages. There are a large and finite number of agents
playing the game $\mathcal{N}$. Each agent $j \in \mathcal{N}$ is
characterized by a level of preference for the public good $\xi _{j}\text{.}$
In the first stage the agents form coalitions of a fixed number of players $N%
\text{,}$ in a way we will describe in section \ref{hazing section}. In the
second stage, once a coalition is formed, every agent simultaneously decides
how much to contribute towards a public good, $x_{j}$ taking as given the
coalition structure.

The total amount of the public good is given by $V \left (\sum _{i
=1}^{N}x_{i}\right )\text{,}$ with $V \left (\text{.}\right )$ being a
strictly concave twice continuously differentiable function. The personal
cost of the contribution $x_{j}$ is given by $x_{j}^{2}/2.$ The utility of
every agent after the coalition is formed is 
\begin{equation*}
U_{j} =\xi _{j} V \left (\sum _{i =1}^{N}x_{i}\right ) -\frac{1}{2} x_{j}^{2}
\end{equation*}%
The parameter $\xi _{j}$ indicates that the individual may care for more
than only his own utility but she internalizes the benefits of other
players. A player with a $\xi _{j} >1$ is thought to be ``altruistic''\
whereas $\xi _{j} =1$ is a selfish player. The value of $\xi _{j}$ is
obviously relevant for all players in any coalition, as it increases the
marginal value of contributions of their owners. It is also the private
information of the players, which can be (partially) solved using a pre-game
costly signaling exercise which we now describe.

\subsection{The signaling game\label{hazing section}}

Consider a finite set of coalitions $l \in \{1 ,\ldots ,L\}$, each with $N$
slots.\footnote{%
We assume for simplicity that all coalitions are equally sized. Nothing
substantial changes if they are still exogenous but differently sized. Later
on we address endogenously sized coalitions.} Let $\bar{\xi }_{l} =\sum _{i
\in l}\xi _{i}\text{.}$ Assume as well that each $N$ is large enough so that
the compositional impact of changing one member's type on the $\bar{\xi }%
_{l} $ of coalition $l$ is negligible. The coalition formation game is as
follows. A set of `entrepreneurs'\ post a set of prices $p_{l}$ for $l \in
\{1 ,\ldots ,L\}\text{.}$ We think of prices that need not be `monetary'.
Any costly action whose effect in utility is separable would work. Then all
individuals in the population decide which coalition to join (and thus pay
the price of joining). If a coalition is oversubscribed (it has $M >N$
candidates applying to it), then a fair lottery decides which $M -N$
candidates get allocated to not being in a coalition, which is free and
obviously has enough capacity for the full group.

Order arbitrarily the available coalitions. We denote by \textsl{top-down
sorting} the following assignment of members into coalitions according to
their type. Coalition $1$ gets assigned the $N$ highest type members,
coalition $2$ the $N$ highest type members among the remaining ones, and so
on until all members are assigned to one (and only one) coalition. The
top-down sorting leads to a coalition structure with types stratified from
higher to lower. Namely, given two coalitions $l >k$ and two members $i ,j$
that are assigned to either coalition by top-down sorting, then, $\xi _{i}
>\xi _{j}~$and $\bar{\xi }_{l} \geq \bar{\xi }_{k}$. To ensure that this
inequality is strict for at least one pair of players in two different
coalitions, we assume that two successive coalitions cannot be fully
occupied by players of the same type. As mentioned above to join a coalition 
$l$, they must choose an action with cost $p_{l}$. The last group is the
option not being in a coalition. We say that an assignment of members to
coalitions and a vector of entry costs forms an equilibrium when, given the
costs, no individual prefers to change coalitions and either a coalition is
full or its associated cost is zero.

Denote by 
\begin{equation*}
\frac{ \partial U_{j}}{ \partial \bar{\xi }_{l}}
\end{equation*}%
the derivative of the equilibrium utility of individual $j$ with respect to
changes in the average type $\bar{\xi }_{l}$ of the coalition to which they
belong. The differential sensitivity of different types of potential
coalition members to the composition of coalitions has implications for
coalition formation that we now analyze.

\begin{proposition}
\label{segregate}There exists an assignment equilibrium with \textsl{%
top-down sorting} if whenever $i$ and $j$ are such that $\xi _{i} >\xi _{j}$
we have that 
\begin{equation}
\frac{ \partial U_{i}}{ \partial \bar{\xi }_{l}} -\frac{ \partial U_{j}}{
\partial \bar{\xi }_{l}} >0  \label{part a}
\end{equation}%
Letting $\xi _{i^{ \ast } (l)}$ be the type of the lowest member in
coalition $l$, the fee for a full coalition $l$ is defined recursively as: 
\begin{equation}
p_{l} =\int \limits _{\bar{\xi }_{l +1}}^{\bar{\xi }_{l}}\frac{ \partial
U_{_{i^{ \ast } (l)}}}{ \partial \bar{\xi }_{j}} d \bar{\xi }_{j} +p_{l +1}
,l =1 ,\ldots ,L -1 ,  \label{fee}
\end{equation}%
with $p_{L} =0$.
\end{proposition}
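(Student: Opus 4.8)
The plan is to take the proposed profile---assign players by \textsl{top-down sorting} and set the entry costs $p_{1},\dots,p_{L}$ by the recursion \eqref{fee}---and verify directly that it meets the equilibrium definition: no player strictly prefers to switch, each coalition $1,\dots,L-1$ is full, and $p_{L}=0$. Write the (second-stage) equilibrium payoff of a type-$\xi$ player in a coalition of average type $\bar{\xi}_{l}$ as $U(\xi,\bar{\xi}_{l})$, and let the net payoff in coalition $l$ be $W(\xi,l)=U(\xi,\bar{\xi}_{l})-p_{l}$. The first step is to read \eqref{fee} as an indifference condition for the marginal member. Since $\int_{\bar{\xi}_{l+1}}^{\bar{\xi}_{l}}\frac{\partial U_{i^{*}(l)}}{\partial\bar{\xi}}\,d\bar{\xi}=U(\xi_{i^{*}(l)},\bar{\xi}_{l})-U(\xi_{i^{*}(l)},\bar{\xi}_{l+1})$ by the fundamental theorem of calculus, rearranging \eqref{fee} gives
\[
U(\xi_{i^{*}(l)},\bar{\xi}_{l})-p_{l}\;=\;U(\xi_{i^{*}(l)},\bar{\xi}_{l+1})-p_{l+1},
\]
that is $W(\xi_{i^{*}(l)},l)=W(\xi_{i^{*}(l)},l+1)$: the \emph{lowest} type in coalition $l$ is exactly indifferent between staying in $l$ and dropping to $l+1$.

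Next I would convert this local indifference into local strict preferences using \eqref{part a}. Consider the adjacent difference $g_{\xi}(l)\equiv W(\xi,l)-W(\xi,l+1)=\bigl[U(\xi,\bar{\xi}_{l})-U(\xi,\bar{\xi}_{l+1})\bigr]-(p_{l}-p_{l+1})$. Only the bracket depends on $\xi$, and writing it as $\int_{\bar{\xi}_{l+1}}^{\bar{\xi}_{l}}\frac{\partial U_{\xi}}{\partial\bar{\xi}}\,d\bar{\xi}$, condition \eqref{part a} says precisely that $\partial U/\partial\bar{\xi}$ is strictly increasing in the player's own type, so $g_{\xi}(l)$ is strictly increasing in $\xi$. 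Combined with $g_{\xi_{i^{*}(l)}}(l)=0$ from the previous step, this yields the single-crossing conclusion
\[
g_{\xi}(l)>0\ \text{ for }\ \xi>\xi_{i^{*}(l)},\qquad g_{\xi}(l)<0\ \text{ for }\ \xi<\xi_{i^{*}(l)},
\]
so any player strictly above the marginal type of $l$ strictly prefers $l$ to $l+1$, and any player strictly below prefers $l+1$ to $l$.

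The crucial---and least mechanical---step is to upgrade these adjacent comparisons into a global optimum, ruling out profitable jumps to distant coalitions. Here I would use that \textsl{top-down sorting} orders the marginal types, $\xi_{i^{*}(1)}>\xi_{i^{*}(2)}>\dots>\xi_{i^{*}(L-1)}$. Fixing $\xi$, the sign of $g_{\xi}(l)$ is governed by whether $\xi$ exceeds $\xi_{i^{*}(l)}$; since $\xi_{i^{*}(l)}$ decreases in $l$, $g_{\xi}(l)$ is negative for small $l$ and positive for large $l$, switching sign exactly once. Hence $W(\xi,\cdot)$ is single-peaked in $l$, with peak at the coalition $l(\xi)$ satisfying $\xi_{i^{*}(l(\xi))}\le\xi<\xi_{i^{*}(l(\xi)-1)}$---which is exactly the coalition to which \textsl{top-down sorting} assigns $\xi$. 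Thus each player weakly prefers its assigned coalition to every other (strictly, except that a marginal type is indifferent toward the coalition just below, a tie broken in favour of the assignment), so no beneficial deviation exists; since coalitions $1,\dots,L-1$ are filled to their $N$ slots and the outside option carries $p_{L}=0$, the fullness/zero-price requirement holds and the profile is an equilibrium.

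I expect the main obstacle to be exactly this global step: the indifference built into \eqref{fee} and the single crossing from \eqref{part a} only directly compare \emph{neighbouring} coalitions, and the argument that no player wishes to leapfrog several coalitions rests on combining single crossing with the monotone ordering of the marginal types $\xi_{i^{*}(l)}$ to secure single-peakedness of $W(\xi,\cdot)$. Two supporting points also need attention: that the recursion \eqref{fee} yields an admissible price vector (in particular $p_{l}\ge 0$ and decreasing in $l$, which holds whenever $\partial U/\partial\bar{\xi}>0$ over the relevant range, as under decentralized effort choice), and that oversubscription poses no threat---because every player's assigned coalition is its unique best response, no one wishes to apply to a full, higher-priced coalition and face the rationing lottery.
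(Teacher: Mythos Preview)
Your argument is correct and follows the same core idea as the paper: use the recursion \eqref{fee} to make the lowest type $\xi_{i^{*}(l)}$ indifferent between $l$ and $l+1$, then invoke the single-crossing condition \eqref{part a} to show that any $\xi_i\ge\xi_{i^{*}(l)}$ weakly prefers $l$ to $l+1$ and any $\xi_i\le\xi_{i^{*}(l-1)}$ weakly prefers $l$ to $l-1$. The paper's proof in fact stops at these two \emph{adjacent} comparisons; your additional single-peakedness step (chaining the adjacent inequalities via the monotone sequence $\xi_{i^{*}(1)}>\cdots>\xi_{i^{*}(L-1)}$ to rule out leapfrogging) and your remarks on price monotonicity and oversubscription go beyond what the paper writes, but they are sound and close a gap the paper leaves implicit.
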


\begin{proof}
See Appendix.
\end{proof}

This condition provides a test for the existence of endogenous sorting.

\subsection{Trust and group contributions}

Given the coalitions that have formed, we now analyze the equilbria in the
subgames where the coalitions are present. Remember that 
\begin{equation*}
U_{j} =\xi _{j} V \left (\sum _{i =1}^{N}x_{i}\right ) -\frac{1}{2} x_{j}^{2}
\end{equation*}%
The FOC of the second stage problem are 
\begin{equation*}
\frac{ \partial U_{j}}{ \partial x_{j}} =\xi _{j} V^{ \prime } \left (\sum
_{i =1}^{N}x_{i}\right ) -x_{j} =0
\end{equation*}%
This implies that for all $i ,j$ 
\begin{equation*}
\frac{x_{j}}{\xi _{j}} =\frac{x_{i}}{\xi _{i}}
\end{equation*}%
So normalizing $\xi _{1} =1$ 
\begin{equation*}
\xi _{i} x_{1} =x_{i}
\end{equation*}%
and hence in equilibrium 
\begin{equation}
V^{ \prime } \left (x_{1} \sum _{i =1}^{N}\xi _{i}\right ) =x_{1}
\label{FOC x1}
\end{equation}%
which yields a unique equilibrium given the composition of the coalition.
Let a particular coalition $A$ composed of a group of people with qualities $%
\left (\xi _{1} ,\xi _{2} ,\ldots ,\xi _{n}\right )$ with 
\begin{equation*}
\bar{\xi }_{l} =\frac{1}{N} \sum _{i =1}^{N}\xi _{i}
\end{equation*}%
Totally differentiating \ref{FOC x1} with respect to $\bar{\xi }_{l}$ we get;

\begin{equation*}
V^{ \prime \prime } \left (N x_{1} \bar{\xi }_{l}\right ) \frac{ \partial
x_{1}}{ \partial \bar{\xi }_{l}} N \bar{\xi }_{l} +V^{ \prime \prime } \left
(N x_{1} \bar{\xi }_{l}\right ) N x_{1} =\frac{ \partial x_{1}}{ \partial 
\bar{\xi }_{l}}
\end{equation*}
\begin{equation*}
\frac{ \partial x_{1}}{ \partial \bar{\xi }_{l}} =\frac{V^{ \prime \prime }
\left (N x_{1} \bar{\xi }_{l}\right ) N x_{1}}{1 -V^{ \prime \prime } \left
(N x_{1} \bar{\xi }_{l}\right ) N \bar{\xi }_{l}} <0
\end{equation*}%
so contributions decrease with average quality. Nevertheless equilibrium
utility

\begin{equation*}
U_{j} =\xi _{j} V \left (N x_{1} \bar{\xi }_{l}\right ) -\frac{\xi _{j}^{2}}{%
2} x_{1}^{2}
\end{equation*}%
so that 
\begin{align*}
\frac{ \partial U_{j}}{ \partial \bar{\xi }_{l}} & = & \xi _{j} V^{ \prime }
\left (N x_{1} \bar{\xi }_{l}\right ) N \bar{\xi }_{l} \frac{ \partial x_{1}%
}{ \partial \bar{\xi }_{l}} +\xi _{j} N x_{1} V^{ \prime } \left (N x_{1} 
\bar{\xi }_{l}\right ) -\xi _{j}^{2} x_{1} \frac{ \partial x_{1}}{ \partial 
\bar{\xi }_{l}} \\
& = & \xi _{j} N x_{1}^{2} \binom{1 -V^{ \prime \prime } \left (N x_{1} \bar{%
\xi }_{l}\right ) \xi _{j}}{1 -V^{ \prime \prime } \left (N x_{1} \bar{\xi }%
_{l}\right ) N \bar{\xi }_{l}} >0
\end{align*}

Furthermore, from inspection it is clear that if $\bar{\xi }_{l}$ changes
little with $\xi _{j}$, say because every individual is negligible, then 
\begin{equation}
\frac{ \partial ^{2}U_{j}}{ \partial \bar{\xi }_{l} \partial \xi _{j}} >0
\label{sorting decentralized}
\end{equation}

From equation (\ref{sorting decentralized}) it is immediate that

\begin{proposition}
\label{segregate decentralized}There exists an assignment equilibrium with 
\textsl{top-down sorting} under decentralized effort choice.
\end{proposition}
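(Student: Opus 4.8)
The plan is to derive this proposition as a direct corollary of Proposition~\ref{segregate}. That proposition already guarantees the existence of a top-down sorting equilibrium, together with the recursively defined fee schedule~(\ref{fee}), under the single hypothesis that types satisfy the single-crossing condition~(\ref{part a}): whenever $\xi_i > \xi_j$, the high type is more sensitive to the coalition's average type than the low type, $\frac{\partial U_i}{\partial \bar\xi_l} - \frac{\partial U_j}{\partial \bar\xi_l} > 0$. So the entire task reduces to verifying that this inequality holds once second-stage contributions are chosen non-cooperatively by each member.

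First I would confirm that the decentralized second stage has a well-defined, unique equilibrium. The first-order conditions give $x_i = \xi_i x_1$ for every member, and substituting into the aggregate condition yields~(\ref{FOC x1}), which pins down $x_1$ uniquely by strict concavity of $V$; hence the equilibrium utility $U_j = \xi_j V(N x_1 \bar\xi_l) - \tfrac{1}{2}\xi_j^2 x_1^2$ and the comparative static $\frac{\partial U_j}{\partial \bar\xi_l}$ are unambiguously defined. Next I would invoke the cross-partial computation already carried out in the text, culminating in equation~(\ref{sorting decentralized}), namely $\frac{\partial^2 U_j}{\partial \bar\xi_l \partial \xi_j} > 0$. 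Since $\frac{\partial U_j}{\partial \bar\xi_l}$ is therefore strictly increasing in the own type $\xi_j$, for any pair with $\xi_i > \xi_j$ the fundamental theorem of calculus (or the mean value theorem) gives $\frac{\partial U_i}{\partial \bar\xi_l} - \frac{\partial U_j}{\partial \bar\xi_l} = \int_{\xi_j}^{\xi_i} \frac{\partial^2 U_j}{\partial \bar\xi_l \partial \xi}\, d\xi > 0$, which is exactly~(\ref{part a}). Proposition~\ref{segregate} then delivers the top-down sorting equilibrium with fees~(\ref{fee}).

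The step I expect to carry the real content is the sign of the cross-partial, and the main subtlety there is the legitimacy of treating both $\bar\xi_l$ and the equilibrium contribution $x_1$ as fixed when differentiating $U_j$ with respect to $\xi_j$. This is where the modeling assumption that each coalition is large enough that a single member's type has negligible compositional impact on $\bar\xi_l$ does the work: it lets me suppress the indirect channels through which $\xi_j$ feeds back into $x_1$ and the argument of $V$, so that differentiating the closed form for $\frac{\partial U_j}{\partial \bar\xi_l}$ leaves only the direct term $\frac{N x_1^2 (1 - 2 V'' \xi_j)}{1 - V'' N \bar\xi_l}$, whose numerator and denominator are both positive because $V'' < 0$. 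Were this negligibility assumption dropped, the omitted feedback terms would need to be bounded and signed explicitly, which is the one place the argument could break down; under the maintained large-$N$ assumption, however, the inequality is clean and the proposition is immediate.
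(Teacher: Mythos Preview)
Your proposal is correct and follows essentially the same approach as the paper: the paper's proof is simply the remark that the result is ``immediate'' from equation~(\ref{sorting decentralized}), and you have spelled out the implicit chain of reasoning---that the positive cross-partial delivers the single-crossing condition~(\ref{part a}), which in turn feeds into Proposition~\ref{segregate}. Your explicit discussion of the large-$N$ negligibility assumption and the computation of the direct term $\frac{N x_1^2(1-2V''\xi_j)}{1-V''N\bar\xi_l}$ merely make rigorous what the paper leaves to inspection.
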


\subsection{The public good game: centralized choosing}

Suppose instead that the level of contributions in a coalition are decided
centrally by a utilitarian social planner.

\begin{equation*}
U_{j} =\xi _{j} V \left (\sum _{i =1}^{N}x_{i}\right ) -\frac{1}{2} x_{j}^{2}
\end{equation*}

\begin{equation*}
\mathcal{U}= \sum_{j=1}^{N}U_{j}=\sum_{j=1}^{N}\left( \xi _{j}V\left(
\sum_{i=1}^{N}x_{i}\right) -\frac{1}{2}x_{j}^{2}\right) =N\bar{\xi}%
_{l}V\left( \sum_{i=1}^{N}x_{i}\right) -\frac{1}{2}\sum_{j=1}^{N}x_{j}^{2}
\end{equation*}%
The FOC of the second stage problem are now 
\begin{equation*}
\frac{\partial \mathcal{U}}{\partial x_{j}}=N\bar{\xi}_{l}V^{\prime }\left(
\sum_{i=1}^{N}x_{i}\right) -x_{j}=0
\end{equation*}%
so for all $i,j$ within a coalition 
\begin{equation*}
x_{i}=x_{j}
\end{equation*}%
\begin{equation}
N\bar{\xi}_{l}V^{\prime }\left( Nx_{1}\right) -x_{1}=0  \label{FOC x2}
\end{equation}

Totally differentiating \ref{FOC x2} with respect to $\bar{\xi }_{l}$ we get;

\begin{equation*}
N \frac{ \partial x_{1}}{ \partial \bar{\xi }_{l}} V^{ \prime \prime } \left
(N x_{1}\right ) N \bar{\xi }_{l} +N V^{ \prime } \left (N x_{1}\right ) =%
\frac{ \partial x_{1}}{ \partial \bar{\xi }_{l}}
\end{equation*}
\begin{equation*}
\frac{ \partial x_{1}}{ \partial \bar{\xi }_{l}} =\frac{V^{ \prime } \left
(N x_{1} \bar{\xi }_{l}\right ) N}{1 -V^{ \prime \prime } \left (N
x_{1}\right ) N^{2} \bar{\xi }_{l}} >0
\end{equation*}%
so contributions increase with average quality. Hence equilibrium utility

\begin{equation*}
U_{j} =\xi _{j} V \left (N x_{1}\right ) -\frac{x_{j}^{2}}{2}
\end{equation*}%
remember that $N \bar{\xi }_{l} V^{ \prime } \left (N x_{1}\right ) -x_{1} =0%
\text{,}$ so that 
\begin{align*}
\frac{ \partial U_{j}}{ \partial \bar{\xi }_{l}} & = & \xi _{j} V^{ \prime }
\left (N x_{1}\right ) N \frac{ \partial x_{1}}{ \partial \bar{\xi }_{l}}
-x_{1} \frac{ \partial x_{1}}{ \partial \bar{\xi }_{l}} \\
& = & \xi _{j} \frac{x_{1}}{\bar{\xi }_{l}} \frac{ \partial x_{1}}{ \partial 
\bar{\xi }_{l}} -x_{1} \frac{ \partial x_{1}}{ \partial \bar{\xi }_{l}} =%
\binom{\xi _{j} -\bar{\xi }_{l}}{\bar{\xi }_{l}} x_{1} \frac{ \partial x_{1}%
}{ \partial \bar{\xi }_{l}} \\
& = & \binom{\xi _{j} -\bar{\xi }_{l}}{\bar{\xi }_{l}} x_{1} \frac{V^{
\prime } \left (N x_{1} \bar{\xi }_{l}\right ) N}{1 -V^{ \prime \prime }
\left (N x_{1}\right ) N^{2} \bar{\xi }_{l}}
\end{align*}%
where the sign is positive for $j$ with $\xi _{j} -\bar{\xi }_{l} >0$ and
negative otherwise.

\begin{remark}
It is clear that with respect to the decentralized equilibrium some types of
players, i.e. those with a higher than average type within a coalition, have
a higher utility while others, i.e. those with lower than average type, have
a lower utility.
\end{remark}

Furthermore, from inspection it is clear that if changing a single $\xi _{j}$
does not change much the average type of a coalition, then 
\begin{equation}
\frac{ \partial ^{2}U_{j}}{ \partial \bar{\xi }_{l} \partial \xi _{j}} >0
\label{sorting centralized}
\end{equation}

From equation (\ref{sorting centralized}) it is immediate that

\begin{proposition}
\label{segregate centralized}There exists an assignment equilibrium with 
\textsl{top-down sorting} under centralized effort choice.
\end{proposition}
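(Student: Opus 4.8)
The plan is to reduce the statement to Proposition~\ref{segregate}, which already delivers both the existence of a top-down sorting equilibrium and the explicit recursive fee schedule~(\ref{fee}), provided the single-crossing condition~(\ref{part a}) holds. Hence the only thing that must be verified is that, under centralized effort choice, the equilibrium utilities satisfy: for any two members with $\xi_i > \xi_j$ facing a common coalition average $\bar{\xi}_l$,
\[
\frac{\partial U_i}{\partial \bar{\xi}_l} - \frac{\partial U_j}{\partial \bar{\xi}_l} > 0 .
\]
Everything downstream---the calibration of the fees and the verification that no member wishes to switch coalitions---is then inherited verbatim from Proposition~\ref{segregate}, so no separate deviation argument is needed.

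First I would note that the required difference is just the integral of the cross-partial along the type dimension: since $i$ and $j$ share the same $\bar{\xi}_l$ and the same equilibrium per-capita contribution $x_1$ (recall that under centralization $x_i=x_j=x_1$), the fundamental theorem of calculus gives
\[
\frac{\partial U_i}{\partial \bar{\xi}_l} - \frac{\partial U_j}{\partial \bar{\xi}_l} = \int_{\xi_j}^{\xi_i} \frac{\partial^2 U}{\partial \bar{\xi}_l \, \partial \xi} \, d\xi .
\]
By~(\ref{sorting centralized}) the integrand is strictly positive, and $\xi_i > \xi_j$ makes the integral strictly positive, which is precisely~(\ref{part a}). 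Concretely, the expression derived above for $\partial U_j/\partial \bar{\xi}_l$ is affine in $\xi_j$ with slope $\frac{x_1}{\bar{\xi}_l}\cdot\frac{V' N}{1 - V'' N^2 \bar{\xi}_l}$, and this slope is positive because $x_1,\,V',\,\bar{\xi}_l,\,N>0$ while strict concavity ($V''<0$) forces the denominator above unity; thus $\partial U_j/\partial \bar{\xi}_l$ is strictly increasing in the member's own type, again yielding~(\ref{part a}).

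The step deserving the most care---and the feature separating this result from Proposition~\ref{segregate decentralized}---is that here one may \emph{not} argue that every member prefers a higher-type pool. As the Remark records, under centralized provision $\partial U_j/\partial \bar{\xi}_l$ is negative for below-average members, so a richer coalition actively harms the low types inside it. The sorting argument therefore cannot rest on levels; it must rest purely on the differential sensitivity~(\ref{part a}). This is exactly the form in which Proposition~\ref{segregate} is stated: the fee~(\ref{fee}) is set so that the lowest member $i^\ast(l)$ of coalition $l$ is indifferent between coalitions $l$ and $l+1$, and~(\ref{part a}) then forces every higher type to strictly prefer staying while every lower type strictly prefers not to buy in, even though some of those members are made worse off by the higher average type. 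Invoking Proposition~\ref{segregate} with this fee schedule completes the proof. The only modelling proviso to flag is the negligibility assumption underlying~(\ref{sorting centralized})---that perturbing a single $\xi_j$ leaves $\bar{\xi}_l$ and $x_1$ essentially unchanged---which is already built into the large-$N$ hypothesis of the model and is what permits differentiating the affine expression above while holding $\bar{\xi}_l$ and $x_1$ fixed.
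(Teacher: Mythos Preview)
Your proposal is correct and follows exactly the paper's approach: the paper simply states that the result is immediate from equation~(\ref{sorting centralized}), i.e.\ the positive cross-partial, which via Proposition~\ref{segregate} yields the top-down sorting equilibrium. You have merely spelled out the one-line reduction in more detail---the affine-in-$\xi_j$ observation, the fundamental theorem of calculus link between the cross-partial and condition~(\ref{part a}), and the caveat that only differential (not level) sensitivity is needed---all of which are implicit in the paper's ``immediate'' claim.
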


\section{Endogenous size and social optima}

In the previous sections, the size of coalitions has been exogenously fixed
at $N\text{.}$ But given the environment considered, it would natural to
consider the equilibrium when the coalition size is also endogenous. This
could have important implications both for the equilibrium contributions,
and for the efficient composition and size of the groups. One main tradeoff
is the following. In a larger group, the free-riding \ can become more
problematic. On the other hand, in a larger group, the marginal benefits of
an action positively affect a larger set of people. What is optimal will
likely depend on specific features of the technology.

In this section we show that when the group size can be centrally chosen a
utilitarian social planner would sometimes prefer top-down sorting (we call
this `segregation') while under other conditions she would prefer types to
mix so that all groups have the same average type (we call this
`integration').

We now use a CRRA $V \left (\text{.}\right )$ function to analyze this
problem. 
\begin{equation*}
V \left (\sum _{i =1}^{N}x_{i}\right ) =\frac{\left (\sum _{i
=1}^{N}x_{i}\right )^{1 -a} -1}{1 -a}
\end{equation*}

In order for this $V \left (\text{{\textperiodcentered}}\right )$ function
to make sense as a production function, we assume $x_{i} \geq 1.$ Note that
there is no risk in this problem, so we use the CRRA function as a
convenient way to parameterize concavity.

Then, for a given $N$ the FOC for coalition efforts in this case are

\begin{align*}
N\bar{\xi}_{l}\left( \left( Nx_{1}\right) ^{-a}\right) -x_{1}& = & & 0 \\
\bar{\xi}_{l}N^{1-a}& = & & x_{1}^{1+a}
\end{align*}

Remember that in coalition all members choose the same value under
centralized decision-making and so $x_{i} =x_{j} =x_{1}$ within the
coalition. Thus the total utility within the coalition is 
\begin{equation*}
\mathcal{U}_{l} =N \bar{\xi }_{l} \frac{\left (N \left (\bar{\xi }_{l} N^{1
-a}\right )^{\frac{1}{1 +a}}\right )^{1 -a} -1}{1 -a} -\frac{1}{2} N \left
(\left (\bar{\xi }_{l} N^{1 -a}\right )^{\frac{1}{1 +a}}\right )^{2}
\end{equation*}%
so that 
\begin{equation*}
\mathcal{U}_{l} =\left (\frac{1}{1 -a} -\frac{1}{2}\right ) \bar{\xi }_{l}^{%
\frac{2}{1 +a}} N^{\frac{3 -a}{1 +a}} -\frac{N \bar{\xi }_{l}}{1 -a}
\end{equation*}%
and society welfare is 
\begin{equation}
\mathcal{U} =\sum _{l =1}^{L}\mathcal{U}_{l} =\sum _{l =1}^{L}\left (\left (%
\frac{1}{1 -a} -\frac{1}{2}\right ) \bar{\xi }_{l}^{\frac{2}{1 +a}} N^{\frac{%
3 -a}{1 +a}} -\frac{N \bar{\xi }_{l}}{1 -a}\right )
\label{endogenous segregation condition}
\end{equation}

Define $\mathcal{U}_{T D}$ as the total utility obtained in society when
coalitions are formed with \textsl{top-down sorting} and $V \left (\sum _{i
=1}^{N}x_{i}\right ) =\left (\left (\sum _{i =1}^{N}x_{i}\right )^{1 -a}
-1\right )/\left (1 -\alpha \right )\text{.}$

Suppose that it is possible to organize coalitions so that all coalitions
have the same mean $\bar{\xi }_{l}$ and define $\mathcal{U}_{I}$ as the
total utility obtained in society when all coalitions have the same mean $%
\bar{\xi }_{l}^{ \ast } =\sum _{l =1}^{L}\bar{\xi }_{l}/L$ where $\bar{\xi }%
_{l}$ is the group $l$ mean under top-down sorting. Because all the groups
have the same average type, the optimally chosen contributions are the same
\ in all of them and thus no one has an incentive to choose the group to
which they belong.

Remember that the central planner does not know the types of any of the
players. Thus, we assume groups are sufficiently large that simply
allocating individuals randomly to groups generates groups with equal
average types in expectation.

\begin{proposition}
\label{propsegregation}Suppose coalitions size is exogenously set to $N\text{%
.}$ Then $\mathcal{U}_{T D} >\mathcal{U}_{I}$ if and only if $a <1.$
\end{proposition}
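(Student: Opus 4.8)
The plan is to work directly from the closed-form welfare in~(\ref{endogenous segregation condition}) and reduce the comparison of $\mathcal{U}_{TD}$ and $\mathcal{U}_I$ to a single application of Jensen's inequality. First I would note that both quantities are sums over the $L$ coalitions of the same per-group expression
\[
\mathcal{U}_l=\left(\frac{1}{1-a}-\frac{1}{2}\right)\bar{\xi}_l^{\frac{2}{1+a}}N^{\frac{3-a}{1+a}}-\frac{N\bar{\xi}_l}{1-a},
\]
and that the two regimes differ only through the group means: top-down sorting uses $\bar{\xi}_1,\dots,\bar{\xi}_L$, whereas integration replaces each by the common value $\bar{\xi}^{*}=\frac{1}{L}\sum_l\bar{\xi}_l$. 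The key preliminary observation is that the linear term $-N\bar{\xi}_l/(1-a)$ contributes the same total in both regimes, because $\sum_l\bar{\xi}_l=L\bar{\xi}^{*}$; hence it cancels in the difference $\mathcal{U}_{TD}-\mathcal{U}_I$.

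Writing $r=\frac{2}{1+a}$ and $C=\frac{1}{1-a}-\frac{1}{2}$, what survives is
\[
\mathcal{U}_{TD}-\mathcal{U}_I=C\,N^{\frac{3-a}{1+a}}\left(\sum_{l=1}^{L}\bar{\xi}_l^{\,r}-L\,(\bar{\xi}^{*})^{r}\right).
\]
Since $N^{\frac{3-a}{1+a}}>0$, the sign is controlled by $C$ together with the sign of the bracket, which is exactly $L$ times the Jensen gap $\frac{1}{L}\sum_l g(\bar{\xi}_l)-g(\bar{\xi}^{*})$ for $g(t)=t^{r}$. Because $g''(t)=r(r-1)t^{r-2}$ with $r-1=\frac{1-a}{1+a}$, the map $g$ is strictly convex when $a<1$ and strictly concave when $a>1$; thus the bracket is strictly positive for $a<1$ and strictly negative for $a>1$. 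The paper's assumption that two successive coalitions are not identically typed guarantees the means $\bar{\xi}_l$ are not all equal, so the relevant inequality is strict.

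The final step is to combine the sign of $C=\frac{1+a}{2(1-a)}$ with the direction of this Jensen gap and read off the claimed equivalence, and this is exactly where I expect the main obstacle to lie. Both the prefactor $C$ and the curvature of $g$ switch sign precisely at $a=1$, so the product that determines $\mathrm{sign}(\mathcal{U}_{TD}-\mathcal{U}_I)$ must be tracked with care across that threshold; it is this sign bookkeeping, rather than any hard estimate, that pins down the ``if and only if''. Everything else---verifying the closed form~(\ref{endogenous segregation condition}) and the cancellation of the linear term---is routine.
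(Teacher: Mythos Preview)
Your approach is exactly the paper's: write out both totals from the closed form~(\ref{endogenous segregation condition}), observe that the linear term $-N\bar{\xi}_l/(1-a)$ sums to the same value under both regimes, and reduce the comparison to a Jensen inequality for $g(t)=t^{2/(1+a)}$ multiplied by the prefactor $C=\tfrac{1+a}{2(1-a)}$.

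However, the step you correctly flag as the crux --- the sign bookkeeping --- does \emph{not} deliver the claimed ``if and only if'' when carried out. You note that both $C$ and the curvature of $g$ flip sign at $a=1$. But that is precisely the problem: for $a<1$ one has $C>0$ and the Jensen bracket $\sum_l\bar{\xi}_l^{\,r}-L(\bar{\xi}^*)^r>0$, while for $a>1$ one has $C<0$ and the bracket $<0$. The product is therefore \emph{positive on both sides} of $a=1$, so the displayed difference
\[
\mathcal{U}_{TD}-\mathcal{U}_I=C\,N^{\frac{3-a}{1+a}}\Bigl(\textstyle\sum_{l}\bar{\xi}_l^{\,r}-L(\bar{\xi}^*)^r\Bigr)
\]
is strictly positive for every $a\neq 1$, not only for $a<1$. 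Your plan, executed faithfully, would establish the ``if'' direction but simultaneously refute the ``only if'' direction.

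The paper's own proof is no more careful on this point: it writes out the two expressions (with the labels $\mathcal{U}_{TD}$ and $\mathcal{U}_I$ apparently interchanged) and then simply asserts ``by applying Jensen's inequality, we have that $\mathcal{U}_{TD}>\mathcal{U}_I$'' without any sign analysis across $a=1$. So your proposal mirrors the paper's argument accurately, but you should be aware that the very bookkeeping you identify as the main obstacle does not close in the direction the statement requires; as written, neither your plan nor the paper's proof yields the biconditional.
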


\begin{proof}
This follows from (\ref{endogenous segregation condition}) since

\begin{align*}
\mathcal{U}_{T D} & = & \sum _{l =1}^{L}\left (\left (\frac{1}{1 -a} -\frac{1%
}{2}\right ) \bar{\xi }_{l}^{ \ast \frac{2}{1 +a}} N^{\frac{3 -a}{1 +a}} -%
\frac{N \bar{\xi }_{l}^{ \ast }}{1 -a}\right ) =\left (\frac{1}{1 -a} -\frac{%
1}{2}\right ) N^{\frac{3 -a}{1 +a}} L \bar{\xi }_{l}^{ \ast \frac{2}{1 +a}} -%
\frac{L N \bar{\xi }_{l}^{ \ast }}{1 -a} \\
& = & \left (\frac{1}{1 -a} -\frac{1}{2}\right ) N^{\frac{3 -a}{1 +a}} L 
\binom{\sum _{l =1}^{L}\bar{\xi }_{l}}{L}^{\frac{2}{1 +a}} -\frac{N \sum _{l
=1}^{L}\bar{\xi }_{l}}{1 -a} \\
\mathcal{U}_{I} & = & \sum _{l =1}^{L}\left (\left (\frac{1}{1 -a} -\frac{1}{%
2}\right ) \bar{\xi }_{l}^{\frac{2}{1 +a}} N^{\frac{3 -a}{1 +a}} -\frac{N 
\bar{\xi }_{l}}{1 -a}\right ) =\left (\frac{1}{1 -a} -\frac{1}{2}\right ) N^{%
\frac{3 -a}{1 +a}} \sum _{l =1}^{L}\bar{\xi }_{l}^{\frac{2}{1 +a}} -\frac{N
\sum _{l =1}^{L}\bar{\xi }_{l}}{1 -a}
\end{align*}%
so that by applying Jensen's inequality, we have that 
\begin{equation*}
\mathcal{U}_{T D} >\mathcal{U}_{I}
\end{equation*}
\end{proof}

\begin{equation}
\frac{\mathcal{U}_{l}}{N}=\frac{1+a}{2\left( 1-a\right) }\bar{\xi}_{l}^{%
\frac{2}{1+a}}N^{\frac{2-2a}{1+a}}-\frac{\bar{\xi}_{l}}{1-a}
\label{average payoff}
\end{equation}%
\begin{equation*}
\frac{\partial \left( \frac{\mathcal{U}_{l}}{N}\right) }{\partial N}=\bar{\xi%
}_{l}^{\frac{2}{1+a}}N^{\frac{2-2a}{1+a}}>0
\end{equation*}%
This implies that for a given $\bar{\xi}_{l}$ the average payoff in a group
increases in $N$ independently of the concavity of the individual functional
form within the CRRA class.

Given that proposition \ref{propsegregation} establishes that average payoff
in the coalition is concave in average type for $a >1$ and thus you want to
form heterogeneous coalitions, it follows immediately that

\begin{proposition}
For $a >1$ the social planner would like a single group of the maximal size.
\end{proposition}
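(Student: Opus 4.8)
The plan is to combine the two comparative statics that are already in hand. The first is the \emph{size effect}: per-capita payoff $\mathcal{U}_l/N$ is strictly increasing in $N$, via the derivative displayed just after (\ref{average payoff}), namely $\partial(\mathcal{U}_l/N)/\partial N=\bar{\xi}_l^{2/(1+a)}N^{(2-2a)/(1+a)}>0$, which holds for every $a$. The second is the \emph{composition effect} of Proposition \ref{propsegregation}: for $a>1$, $\mathcal{U}_l$ is concave in the group mean, so replacing any profile of group means by their common average (``integration'') weakly raises total welfare. Write the population as $P$ agents with aggregate type $S=\sum_i\xi_i$. The one genuine difficulty is that, once size is endogenous, the planner chooses the number of groups $L$ (hence each size $N=P/L$) and the group composition \emph{jointly}, and these margins interact: lowering $L$ enlarges every group while also changing which group means are feasible. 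So a bare ``larger is better and mixing is better, hence one large mixed group'' slogan must be made precise by disentangling the two margins.

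I would start from an arbitrary feasible configuration — a partition into $L$ groups of equal size $N=P/L$ with means $\bar{\xi}_1,\dots,\bar{\xi}_L$ — and improve it in two steps. First, holding $L$ and $N$ fixed, invoke the Jensen step from the proof of Proposition \ref{propsegregation} in the case $a>1$: since $\mathcal{U}_l$ is concave in the mean, equalizing all group means to the population average $\bar{\xi}^{\ast}=S/P$ weakly increases $\mathcal{U}$. After this move the $L$ groups are identical, each of size $N$ and mean $\bar{\xi}^{\ast}$, so total welfare equals $P\cdot(\mathcal{U}_l/N)$ evaluated at $(\bar{\xi}^{\ast},N)$.

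Second, keeping the common mean $\bar{\xi}^{\ast}$ fixed, merge the $L$ identical groups into a single group of size $P$, whose mean is still $\bar{\xi}^{\ast}$. Because $\mathcal{U}_l/N$ is strictly increasing in $N$, the per-capita payoff rises from its value at $N$ to its value at $P$, so total welfare strictly increases whenever $L>1$. Chaining the two steps gives, for any initial partition, (welfare of partition) $\le$ (welfare after integration) $<$ (welfare of the grand coalition), so the unique optimum is a single group of maximal size, i.e.\ the whole population.

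The step I expect to carry the weight is exactly the disentangling of the two margins, and the observation that makes it clean is the conservation of aggregate type: in every configuration $\sum_l N\bar{\xi}_l=S$, so the linear term $-\,N\bar{\xi}_l/(1-a)$ aggregates to the same constant $-\,S/(1-a)$ and drops out of all comparisons, leaving only the curvature term $(\tfrac{1}{1-a}-\tfrac12)\,\bar{\xi}_l^{2/(1+a)}N^{(3-a)/(1+a)}$ to drive them. Once that conservation is noted, the integration step reduces to the Jensen inequality of Proposition \ref{propsegregation} and the merging step reduces to the already-proved monotonicity in $N$; each is individually welfare-improving for $a>1$, and the result follows.
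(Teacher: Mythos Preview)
Your argument is essentially the paper's own proof, only spelled out in greater detail. The paper's justification is a single sentence: combine the monotonicity of per-capita payoff in $N$ (the derivative displayed just above the proposition) with the composition result of Proposition~\ref{propsegregation} for $a>1$, and the claim ``follows immediately.'' Your two-step decomposition---first equalize means using the Jensen step from Proposition~\ref{propsegregation}, then merge the now-identical groups using the $N$-monotonicity---is exactly the way to make that sentence rigorous, and your observation that the linear term $-N\bar{\xi}_l/(1-a)$ aggregates to the constant $-S/(1-a)$ and drops out is a clean bookkeeping point the paper leaves implicit.
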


When, on the other hand $a <1\text{,}$ there is a tradeoff for the ``high''\
quality groups. On the one hand, they would prefer to have total segregation
to increase payoff, since average payoff increases with average type as in
equation \ref{average payoff} we see that it increases in $\bar{\xi }_{l}^{%
\frac{2}{1 +a}}$. On the other hand, a bigger size increases payoff within
the group, as in equation \ref{average payoff} we see that it increases in $%
N^{\frac{2 -2 a}{1 +a}}$. Clearly this tradeoff pushes for at least some
degree of mixing.

\section{Inter-coalition competition}

So far, we have studied the problem of coalition formation and activities as
if the activities of those coalitions did not interact with one another. But
in our motivation we discussed the evidence that group formation often
occurs in contexts where the groups compete, such as gangs (Vigil 1996) or
warfare (Sosis, Kress and Boster 2007). For this reason we will now study
the case where, after the groups form, they compete. The main insight of
this section, is that our earlier results also apply in this case. In
particular, let us assume that the coalitions compete, after having formed,
in a contest. We will show that the incentives for coalition formation that
allow for a top-down sorting assignment (as in Proposition \ref{segregate
decentralized}) still hold in this case. To be more precise, take two
coalitions with respective sizes $M$ and $N$ but otherwise identical utility
functions.

Then assume that the payoffs are given by\footnote{%
We assume that payoffs are the outcomes of a contest, and given by a contest
success function, as in Skaperdas (1996).} 
\begin{equation*}
U_{j} =\frac{\xi _{j}}{N} \frac{V \left (\sum _{i =1}^{N}x_{i}\right )}{V
\left (\sum _{i =1}^{N}x_{i}\right ) +V \left (\sum _{i =1}^{M}y_{i}\right )}
-\frac{1}{2} x_{j}^{2}
\end{equation*}%
and 
\begin{equation*}
V \left (\sum _{i =1}^{N}x_{i}\right ) =\left (\sum _{i =1}^{N}x_{i}\right
)^{b} ;V \left (\sum _{i =1}^{M}y_{i}\right ) =\left (\sum _{i
=1}^{M}y_{i}\right )^{b}
\end{equation*}

\bigskip and $b <1$ which as shown in Proposition \ref{propsegregation} is
the case favorable to segregation into more than one coalition.

\begin{proposition}
\label{segregate decentralized competition}There exists an assignment
equilibrium with \textsl{top-down sorting} under decentralized effort choice
when coalitions compete.
\end{proposition}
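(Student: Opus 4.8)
The plan is to reduce the claim to the single-crossing condition (\ref{part a}) of Proposition \ref{segregate}: once I establish that, with the contest payoffs, $\partial U_i/\partial\bar{\xi}_l - \partial U_j/\partial\bar{\xi}_l > 0$ whenever $\xi_i > \xi_j$, both the existence of a top-down sorting assignment and the recursively defined fees follow verbatim from Proposition \ref{segregate}. So all the real work is in solving the second-stage contest and signing the relevant derivative of equilibrium utility with respect to the coalition's average type.

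First I would solve the second stage. Writing $X = \sum_{i=1}^N x_i$ and $Y = \sum_{i=1}^M y_i$, the first-order condition for member $j$ of the first coalition is $x_j = \frac{\xi_j}{N}\frac{bX^{b-1}Y^b}{(X^b+Y^b)^2}$, so again $x_j$ is proportional to $\xi_j$, giving $x_j = \xi_j X/(N\bar{\xi}_l)$. Summing over the coalition yields the aggregate condition $X^{2-b}(X^b+Y^b)^2 = \bar{\xi}_l\, b\, Y^b$, and symmetrically $Y^{2-b}(X^b+Y^b)^2 = \bar{\xi}_k\, b\, X^b$ for the rival coalition of average type $\bar{\xi}_k$. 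Dividing the two shows $X/Y = (\bar{\xi}_l/\bar{\xi}_k)^{1/2}$; writing $r = (\bar{\xi}_l/\bar{\xi}_k)^{1/2}$, I can solve explicitly for the aggregates and, crucially, for the contest-success probability of the first coalition, which collapses to $\phi = r^b/(r^b+1)$, a function of the two average types alone that is increasing in $\bar{\xi}_l$.

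Next I would substitute back to obtain the equilibrium utility $U_j = \frac{\xi_j}{N}\phi - \frac{1}{2} x_j^2$ and differentiate with respect to $\bar{\xi}_l$, using $\partial r/\partial\bar{\xi}_l = r/(2\bar{\xi}_l)$. The benefit term yields $\frac{\xi_j}{N}\frac{b r^b}{2\bar{\xi}_l(r^b+1)^2} > 0$, which is positive and linear in $\xi_j$; the cost term is quadratic in $\xi_j$ but, because the aggregate $X$ stays bounded while $x_j = O(1/N)$, it is of order $1/N^2$ against the order-$1/N$ benefit. Invoking the paper's standing assumption that each single type is negligible in a large coalition, the benefit dominates, so $\partial U_j/\partial\bar{\xi}_l > 0$ and, to leading order, is proportional to $\xi_j$. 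Hence $\partial^2 U_j/(\partial\bar{\xi}_l\,\partial\xi_j) > 0$, which is exactly (\ref{part a}); applying Proposition \ref{segregate} (with $b<1$, the segregation-favorable case identified in Proposition \ref{propsegregation}) completes the argument.

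The main obstacle I anticipate is the comparative statics of the two-coalition contest: unlike the no-competition case, raising $\bar{\xi}_l$ triggers a strategic response by the rival coalition, so I must confirm that $\partial Y/\partial\bar{\xi}_l$ does not reverse the sign of $\partial U_j/\partial\bar{\xi}_l$. The reduction to the one-dimensional ratio $r$ is what makes this tractable, since it packages both coalitions' responses into the single monotone object $\phi(r)$, and the order-of-magnitude comparison between the $O(1/N)$ prize term and the $O(1/N^2)$ cost term is what guarantees the sign survives the rival's reaction. The remaining delicate point is checking that the negligibility-of-a-single-type approximation is applied consistently, exactly as it was in deriving (\ref{sorting decentralized}) and (\ref{sorting centralized}).
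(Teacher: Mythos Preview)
Your proposal is correct and follows essentially the same route as the paper: derive the contest first-order conditions, use the proportionality $x_j\propto\xi_j$ to reduce to aggregates, obtain $X/Y=(\bar\xi_l/\bar\xi_k)^{1/2}$ and hence the win probability $\phi=r^b/(r^b+1)$, write equilibrium utility, and verify the single-crossing condition so that Proposition~\ref{segregate} applies. The only difference is in the final step: the paper grinds out the explicit four-term expression for $\partial U_j/\partial\bar\xi_l$ and the resulting cross-partial, whereas you short-circuit that computation by observing that the prize term is $O(1/N)$ and linear in $\xi_j$ while the cost term is $O(1/N^2)$, so the large-$N$ negligibility assumption already invoked for (\ref{sorting decentralized}) and (\ref{sorting centralized}) immediately delivers $\partial^2U_j/(\partial\bar\xi_l\,\partial\xi_j)>0$; this is a cleaner packaging of the same argument.
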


\begin{proof}
See Appendix.
\end{proof}

\section{Conclusion}

We have studied a model of public good provision within groups. The group
members have heterogeneous preferences for the public good, and high types
contribute more towards its provision. Thus, all individuals prefer to be in
groups with higher average types. We allow for the existence of
`entrepreneurs' who create the groups and demand a (possibly non-monetary)
`fee' to enter the group. We study the case where the `entrepreneurs' can
enforce a contribution level within the group and also when once in the
group, contributions are voluntary.

We first show that both under centralized and decentralized provision there
is \emph{top-down sorting equilibrium }in which groups are organized
assortatively by type. Under centralized contributions every participant is
better off if average types increase. When contributions are centralized
that is not true and only above average types are better off.

There are interesting results in terms of welfare. Under more concave
utility functions, utilitarian welfare is higher when average types are the
same across groups ( `integration') than under \emph{top-down sorting }(
`segregation'). With less concave utilities, the opposite is true. Concavity
also favors large groups. Under less concave functions there is a trade-off,
as very homogeneous groups are good, but large groups are also good, so one
could be willing to sacrifice in terms of homogeneity to increase
contributions because of group size. The results are robust to environments
in which groups compete among themselves.

While our results provide important insights, there are of course
limitations to our analysis. For example, we have not studied repeated
interactions, which, through reciprocity, could lead to different results.
We conjecture that in a repeated environment we would be more likely to
observe the results we posit under `centralized' group management. We think
the reasons for assortative matching will survive in that case.

We have studied an environment where all the agents are `symmetric'\ even if
they have different types. However, there are important applications where
matching is bilateral, like the marriage market. We think that some of our
insights will carry over in those applications. Indeed\ Greenwood et al.
(2014) document an increase in marriage market assortativity as an important
source of inequality. However, it is not immediately obvious that our
welfare results on sorting will carry over in that context. Another
intriguing question in that environment has to do with the fact that in many
species, it is only one of the sides of the marriage market that engages in
costly signaling before mating (Jiang, Bolnick and Kirkpatrick 2014).

Our paper also provides interesting insights for public policy. Most extant
foundations for \ `integration' policies, either in housing or school
settings have to do either with concerns to reduce inequality (Ananat 2011,
Reardon 2016) or with direct spillovers from higher ability individual on
other individuals (Duflo, Dupas, Kremer 2011, Graham, Imbens, Ridder 2014).
We provide a new foundation for integration that is based on indirect
spillovers. Individuals do not care directly about the type of others', they
care because high types provide higher effort towards public good provision.

It is worth commenting that the integration result in our model comes from
the objective of social surplus maximization rather than equity
considerations. Nonetheless, it has equity implications as low types benefit
from the higher provision of public goods in their group. Thus, the
efficient distribution is also equitable and may provide a new rationale for
integration efforts in different societies. Our model may also provide a
different rationale to whether the attainment gap of children (of lower
socio-economic status) in segregated neighborhoods (see for instance Ananat,
2011) may have something to do with the lower provision of public good in
such neighborhoods. If so, this may have positive implications, particularly
across generations, which may further strengthen case for integration. We
leave this interesting issue for future work.

\textbf{{\LARGE Appendix A\label{Appendix}}} \setcounter{section}{0}\bigskip

\renewcommand{\thesection}{\Alph{section}}

\textbf{Proof of Proposition \ref{segregate}}

A member of coalition $l$ with type $\xi _{i}$ does not want to move to
coalition $l +1$ provided that: 
\begin{align*}
U_{i} \left (\bar{\xi }_{l}\right ) -p_{l} & \geq & U_{i} \left (\bar{\xi }%
_{l +1}\right ) -p_{l +1} \\
U_{i} \left (\bar{\xi }_{l}\right ) -U_{i} \left (\bar{\xi }_{l +1}\right )
& \geq & p_{l} -p_{l +1}\text{.}
\end{align*}%
Such person will have a type such that $\xi _{i^{ \ast } (l -1)} \geq \xi
_{i} \geq $ $\xi _{i^{ \ast } (l)}\text{.}$ Then we have that:

\begin{align*}
U_{i} \left (\bar{\xi }_{l}\right ) -U_{i} \left (\bar{\xi }_{l +1}\right )
& = & \int \limits _{\bar{\xi }_{l +1}}^{\bar{\xi }_{l}}\frac{ \partial U_{i}%
}{ \partial \bar{\xi }_{j}} d \bar{\xi }_{j} \\
& \geq & \int \limits _{\bar{\xi }_{l +1}}^{\bar{\xi }_{l}}\frac{ \partial
U_{i^{ \ast } (l)}}{ \partial \bar{\xi }_{j}} d \bar{\xi }_{j} \\
& = & p_{l} -p_{l +1}\text{,}
\end{align*}%
where the inequality is true by (\ref{part a}). Similarly a member of
coalition $l$ with type $\xi _{i}$ does not want to move to coalition $l -1$
provided that: 
\begin{align*}
U_{i} \left (\bar{\xi }_{l}\right ) -p_{l} & \geq & U_{i} \left (\bar{\xi }%
_{l -1}\right ) -p_{l -1} \\
p_{l -1} -p_{l} & \geq & U_{i} \left (\bar{\xi }_{l -1}\right ) -U_{i} \left
(\bar{\xi }_{l}\right )
\end{align*}%
Remember that $\xi _{i^{ \ast } (l -1)} \geq \xi _{i} \geq $ $\xi _{i^{ \ast
} (l)}\text{.}$ Thus: 
\begin{align*}
p_{l -1} -p_{l} & = & \int \limits _{\bar{\xi }_{l}}^{\bar{\xi }_{l -1}}%
\frac{ \partial U_{i^{ \ast } (l -1)}}{ \partial \bar{\xi }_{j}} d \bar{\xi }%
_{j} \\
& \geq & \int \limits _{\bar{\xi }_{l}}^{\bar{\xi }_{l -1}}\frac{ \partial
U_{i}}{ \partial \bar{\xi }_{j}} d \bar{\xi }_{j} \\
& = & U_{i} \left (\bar{\xi }_{l -1}\right ) -U_{i} \left (\bar{\xi }%
_{l}\right )\text{,}
\end{align*}%
where, again, the inequality is true by (\ref{part a}). $\blacksquare $
\newpage

\textbf{Proof of Proposition \ref{segregate decentralized competition}}

The FOC for the different players are now 
\begin{equation*}
\frac{\xi _{j}}{N}\frac{b\left( \sum_{i=1}^{M}y_{i}\right) ^{b}\left(
\sum_{i=1}^{N}x_{i}\right) ^{b-1}}{\left( \left( \sum_{i=1}^{N}x_{i}\right)
^{b}+\left( \sum_{i=1}^{M}y_{i}\right) ^{b}\right) ^{2}}=x_{j}
\end{equation*}%
\begin{equation*}
\frac{1}{N}\frac{b\left( \sum_{i=1}^{M}y_{i}\right) ^{b}\left(
\sum_{i=1}^{N}x_{i}\right) ^{b-1}}{\left( \left( \sum_{i=1}^{N}x_{i}\right)
^{b}+\left( \sum_{i=1}^{M}y_{i}\right) ^{b}\right) ^{2}}=\frac{x_{j}}{\xi
_{j}}=x_{1}
\end{equation*}%
\begin{equation*}
\frac{1}{N}\frac{b\left( y_{1}\sum_{i=1}^{M}\xi _{i}\right) ^{b}\left(
x_{1}\sum_{i=1}^{N}\xi _{i}\right) ^{b-1}}{\left( \left(
x_{1}\sum_{i=1}^{N}\xi _{i}\right) ^{b}+\left( y_{1}\sum_{i=1}^{M}\xi
_{i}\right) ^{b}\right) ^{2}}=x_{1}
\end{equation*}%
\begin{equation}
\frac{1}{N}\frac{b\left( y_{1}M\bar{\xi}_{k}\right) ^{b}\left( x_{1}N\bar{\xi%
}_{l}\right) ^{b-1}}{\left( \left( x_{1}N\bar{\xi}_{l}\right) ^{b}+\left(
y_{1}M\bar{\xi}_{k}\right) ^{b}\right) ^{2}}=x_{1}  \label{comp1}
\end{equation}%
\begin{equation*}
\frac{1}{M}\frac{b\left( y_{1}M\bar{\xi}_{k}\right) ^{b-1}\left( x_{1}N\bar{%
\xi}_{l}\right) ^{b}}{\left( \left( x_{1}N\bar{\xi}_{l}\right) ^{b}+\left(
y_{1}M\bar{\xi}_{k}\right) ^{b}\right) ^{2}}=y_{1}
\end{equation*}%
\begin{align*}
\frac{\bar{\xi}_{k}M^{2}y_{1}}{\bar{\xi}_{l}N^{2}x_{1}}& = & & \frac{x_{1}}{%
y_{1}} \\
\sqrt{\bar{\xi}_{k}}My_{1}& = & & \sqrt{\bar{\xi}_{l}}Nx_{1}
\end{align*}%
\begin{equation}
\bar{\xi}_{k}My_{1}=\sqrt{\bar{\xi}_{k}}\sqrt{\bar{\xi}_{l}}Nx_{1}
\label{comp2}
\end{equation}%
and substituting (\ref{comp2}) into (\ref{comp1}) we get 
\begin{equation*}
x_{1}=\frac{1}{N}\frac{b\left( y_{1}M\bar{\xi}_{k}\right) ^{b}\left( x_{1}N%
\bar{\xi}_{l}\right) ^{b-1}}{\left( \left( x_{1}N\bar{\xi}_{l}\right)
^{b}+\left( y_{1}M\bar{\xi}_{k}\right) ^{b}\right) ^{2}}=\frac{1}{x_{1}}%
\frac{1}{N}\frac{b\left( \sqrt{\bar{\xi}_{k}}\sqrt{\bar{\xi}_{l}}N\right)
^{b}\left( N\bar{\xi}_{l}\right) ^{b-1}}{\left( \left( N\bar{\xi}_{l}\right)
^{b}+\left( \sqrt{\bar{\xi}_{k}}\sqrt{\bar{\xi}_{l}}N\right) ^{b}\right) ^{2}%
}
\end{equation*}%
\begin{equation*}
x_{1}^{2}=\frac{1}{N^{2}}\frac{b\left( \sqrt{\bar{\xi}_{k}}\sqrt{\bar{\xi}%
_{l}}\right) ^{b}\left( \bar{\xi}_{l}\right) ^{b-1}}{\left( \left( \bar{\xi}%
_{l}\right) ^{b}+\left( \sqrt{\bar{\xi}_{k}}\sqrt{\bar{\xi}_{l}}\right)
^{b}\right) ^{2}}=\frac{1}{N^{2}}\frac{1}{\bar{\xi}_{l}}\frac{b\left( \sqrt{%
\bar{\xi}_{k}}\sqrt{\bar{\xi}_{l}}\right) ^{b}}{\left( \left( \sqrt{\bar{\xi}%
_{l}}\right) ^{b}+\left( \sqrt{\bar{\xi}_{k}}\right) ^{b}\right) ^{2}}
\end{equation*}%
\begin{equation*}
y_{1}^{2}=\frac{\bar{\xi}_{k}\bar{\xi}_{l}\left( Nx_{1}\right) ^{2}}{\left( 
\bar{\xi}_{k}M\right) ^{2}}=\frac{1}{M^{2}}\frac{1}{\bar{\xi}_{k}}\frac{%
b\left( \sqrt{\bar{\xi}_{k}}\sqrt{\bar{\xi}_{l}}\right) ^{b}}{\left( \left( 
\sqrt{\bar{\xi}_{l}}\right) ^{b}+\left( \sqrt{\bar{\xi}_{k}}\right)
^{b}\right) ^{2}}
\end{equation*}%
\begin{eqnarray*}
U_{j} &=&\frac{\xi _{j}}{N}\left( 1-\frac{\left( y_{1}M\bar{\xi}_{k}\right)
^{b}}{\left( x_{1}N\bar{\xi}_{l}\right) ^{b}+\left( y_{1}M\bar{\xi}%
_{k}\right) ^{b}}\right) -\frac{1}{2}x_{j}^{2} \\
&=&\frac{\xi _{j}}{N}\left( 1-\frac{\left( \sqrt{\bar{\xi}_{k}}\right) ^{b}}{%
\left( \sqrt{\bar{\xi}_{l}}\right) ^{b}+\left( \sqrt{\bar{\xi}_{k}}\right)
^{b}}\right) -\frac{1}{2}\xi _{j}^{2}\frac{1}{N^{2}}\frac{1}{\bar{\xi}_{l}}%
\frac{b\left( \sqrt{\bar{\xi}_{k}}\sqrt{\bar{\xi}_{l}}\right) ^{b}}{\left(
\left( \sqrt{\bar{\xi}_{l}}\right) ^{b}+\left( \sqrt{\bar{\xi}_{k}}\right)
^{b}\right) ^{2}}
\end{eqnarray*}%
\bigskip

\begin{eqnarray*}
\frac{\partial U_{j}}{\partial \bar{\xi}_{l}} &=&\frac{1}{2N^{2}}\frac{b}{%
\bar{\xi}_{l}^{2}}\frac{\xi _{j}^{2}}{\left( \left( \sqrt{\bar{\xi}_{l}}%
\right) ^{b}+\left( \sqrt{\bar{\xi}_{k}}\right) ^{b}\right) ^{2}}\left( 
\sqrt{\bar{\xi}_{k}}\sqrt{\bar{\xi}_{l}}\right) ^{b} \\
&&+\frac{1}{2N}\frac{b}{\sqrt{\bar{\xi}_{l}}}\frac{\xi _{j}}{\left( \left( 
\sqrt{\bar{\xi}_{l}}\right) ^{b}+\left( \sqrt{\bar{\xi}_{k}}\right)
^{b}\right) ^{3}}\left( \sqrt{\bar{\xi}_{k}}\right) ^{b}\left( \sqrt{\bar{\xi%
}_{l}}\right) ^{b-1} \\
&&+\frac{1}{2N^{2}}\frac{b^{2}}{\bar{\xi}_{l}^{\frac{3}{2}}}\frac{\xi
_{j}^{2}}{\left( \left( \sqrt{\bar{\xi}_{k}}\right) ^{b}+\left( \sqrt{\bar{%
\xi}_{l}}\right) ^{b}\right) ^{3}}\left( \sqrt{\bar{\xi}_{k}}\sqrt{\bar{\xi}%
_{l}}\right) ^{b}\left( \sqrt{\bar{\xi}_{l}}\right) ^{b-1} \\
&&-\frac{1}{4N^{2}}\frac{b^{2}}{\bar{\xi}_{l}^{\frac{3}{2}}}\left( \sqrt{%
\bar{\xi}_{k}}\right) \frac{\xi _{j}^{2}}{\left( \left( \sqrt{\bar{\xi}_{l}}%
\right) ^{b}+\left( \sqrt{\bar{\xi}_{k}}\right) ^{b}\right) ^{2}}\left( 
\sqrt{\bar{\xi}_{k}}\sqrt{\bar{\xi}_{l}}\right) ^{b-1}
\end{eqnarray*}

which implies that 
\begin{eqnarray*}
\frac{\partial ^{2}U_{j}^{b}}{\partial \bar{\xi}_{B}\partial \xi _{j}} &=&%
\frac{1}{2N^{2}}\frac{b}{\bar{\xi}_{l}^{2}}\frac{2\xi _{j}}{\left( \left( 
\sqrt{\bar{\xi}_{l}}\right) ^{b}+\left( \sqrt{\bar{\xi}_{k}}\right)
^{b}\right) ^{2}}\left( \sqrt{\bar{\xi}_{k}}\sqrt{\bar{\xi}_{l}}\right) ^{b}
\\
&&+\frac{1}{2N^{2}}\frac{b^{2}}{\bar{\xi}_{l}^{\frac{3}{2}}}\frac{2\xi _{j}}{%
\left( \left( \sqrt{\bar{\xi}_{k}}\right) ^{b}+\left( \sqrt{\bar{\xi}_{l}}%
\right) ^{b}\right) ^{3}}\left( \sqrt{\bar{\xi}_{k}}\sqrt{\bar{\xi}_{l}}%
\right) ^{b}\left( \sqrt{\bar{\xi}_{l}}\right) ^{b-1} \\
&&+\frac{1}{4N^{2}}\frac{b^{2}}{\bar{\xi}_{l}^{\frac{3}{2}}}2\xi _{j}\frac{%
\left( \sqrt{\bar{\xi}_{l}}\right) ^{b}-\left( \sqrt{\bar{\xi}_{k}}\right)
^{b}}{\left( \left( \sqrt{\bar{\xi}_{l}}\right) ^{b}+\left( \sqrt{\bar{\xi}%
_{k}}\right) ^{b}\right) ^{2}}\left( \sqrt{\bar{\xi}_{k}}\sqrt{\bar{\xi}_{l}}%
\right) ^{b}
\end{eqnarray*}

and thus the condition \ref{endogenous segregation condition} is satisfied
so that we can have an analog of Proposition \ref{segregate decentralized}. $%
\blacksquare $


\begin{thebibliography}{99}
\bibitem{Ana} Ananat, Elizabeth Oltmans. ``The wrong side (s) of the tracks:
The causal effects of racial segregation on urban poverty and inequality."
American Economic Journal: Applied Economics 3.2 (2011): 34-66.

\bibitem{AIMR} Aimone, Jason A., et al. ``Endogenous group formation via
unproductive costs." Review of Economic Studies 80.4 (2013): 1215-1236.

\bibitem{BY} Baccara, Mariagiovanna, and Leeat Yariv. ``Homophily in peer
groups." American Economic Journal: Microeconomics 5.3 (2013): 69-96.

\bibitem{BO} Baumol, William J., et William Oates. The theory of
environmental policy. Cambridge university press, 1988.

\bibitem{BPD} Ben-Porath, Elchanan, and Eddie Dekel. ``Signaling future
actions and the potential for sacrifice." Journal of Economic Theory 57.1
(1992): 36-51.

\bibitem{Berg} Berglas, Eitan. ``On the theory of clubs." The American
Economic Review 66.2 (1976): 116-121.

\bibitem{Ber} Berman, Eli. 2000. ``Sect, Subsidy, and Sacrifice: An
Economist's View of Ultra-Orthodox Jews." Quarterly Journal of Economics 115
(3): 905-53.

\bibitem{Buc} Buchanan, James M. ``An economic theory of clubs." Economica
32.125 (1965): 1-14.

\bibitem{Cim} Cimino, Aldo. ``The evolution of hazing: Motivational
mechanisms and the abuse of newcomers." Journal of Cognition and Culture
11.3-4 (2011): 241-267.

\bibitem{Cle} Cleaver, Frances. ``The inequality of social capital and the
reproduction of chronic poverty." World development 33.6 (2005): 893-906.

\bibitem{CS} Cornes, Richard C., and Emilson Delfino Silva. ``Prestige
Clubs." University of Alberta School of Business Research Paper 2013-1315
(2013).

\bibitem{CJL} Cullen, Julie Berry, Brian A. Jacob, and Steven Levitt. ``The
effect of school choice on participants: Evidence from randomized
lotteries." Econometrica 74.5 (2006): 1191-1230.

\bibitem{DDK} Duflo, Esther, Pascaline Dupas, and Michael Kremer. ``Peer
effects, teacher incentives, and the impact of tracking: Evidence from a
randomized evaluation in Kenya." American Economic Review 101.5 (2011):
1739-74.

\bibitem{DS } Durlauf, Steven N., and Ananth Seshadri. ``Is assortative
matching efficient?." Economic Theory 21.2-3 (2003): 475-493.

\bibitem{ER} Epple, Dennis, and Richard E. Romano. ``Competition between
private and public schools, vouchers, and peer-group effects." American
Economic Review (1998): 33-62.

\bibitem{GYR} Graham, Bryan S., Guido W. Imbens, and Geert Ridder.
``Complementarity and aggregate implications of assortative matching: A
nonparametric analysis." Quantitative Economics 5.1 (2014): 29-66.

\bibitem{GGKS} Greenwood, J., Guner, N., Kocharkov, G., \& Santos, C.
(2014). ``Marry your like: Assortative mating and income inequality.''\
American Economic Review, 104(5), 348-53.

\bibitem{GGL} Groves, Mark, Gerald Griggs, and Kathryn Leflay. ``Hazing and
initiation ceremonies in university sport: setting the scene for further
research in the United Kingdom." Sport in Society 15.1 (2012): 117-131.

\bibitem{HMS} Hoppe, Heidrun C., Benny Moldovanu, and Aner Sela. ``The
theory of assortative matching based on costly signals." The Review of
Economic Studies 76.1 (2009): 253-281.

\bibitem{HS} Helsley, Robert W., and William C. Strange. ``Exclusion and the
Theory of Clubs." Canadian Journal of Economics (1991): 888-899.

\bibitem{HU} Hsieh, Chang-Tai, and Miguel Urquiola. ``The effects of
generalized school choice on achievement and stratification: Evidence from
Chile's voucher program." Journal of public Economics 90.8-9 (2006):
1477-1503.

\bibitem{Ian} Iannaccone, Laurence R. ``Sacrifice and stigma: Reducing
free-riding in cults, communes, and other collectives." Journal of political
economy 100.2 (1992): 271-291.

\bibitem{JM} Jaramillo, Fernando, and Fabien Moizeau. ``Conspicuous
consumption and social segmentation." Journal of Public Economic Theory 5.1
(2003): 1-24.

\bibitem{JMS} Jenkins, Stephen P., John Micklewright, and Sylke V. Schnepf.
``Social segregation in secondary schools: how does England compare with
other countries?." Oxford Review of Education 34.1 (2008): 21-37.

\bibitem{JBK} Jiang, Yuexin, Daniel I. Bolnick, and Mark Kirkpatrick.
``Assortative mating in animals." The American Naturalist 181.6 (2013):
E125-E138.

\bibitem{Keat} Keating, Caroline F., et al. ``Going to College and Unpacking
Hazing: A Functional Approach to Decrypting Initiation Practices Among
Undergraduates." Group Dynamics: Theory, Research, and Practice 9.2 (2005):
104.

\bibitem{LN} Legros, Patrick, and Andrew F. Newman. ``Beauty is a beast,
frog is a prince: Assortative matching with nontransferabilities."
Econometrica 75.4 (2007): 1073-1102.

\bibitem{LR} Levy, Gilat, and Ronny Razin. ``Religious beliefs, religious
participation, and cooperation." American economic journal: microeconomics
4.3 (2012): 121-51.

\bibitem{DK} de Klerk, Vivian. ``Initiation, hazing or orientation? A case
study at a South African university." International Research in Education
1.1 (2013): 86-100.

\bibitem{MU} MacLeod, W. Bentley, and Miguel Urquiola. ``Reputation and
school competition." American Economic Review 105.11 (2015): 3471-88.

\bibitem{MMF} Mercuro, Anne, Samantha Merritt, and Amanda Fiumefreddo. ``The
Effects of Hazing on Student Self-Esteem: Study of Hazing Practices in Greek
Organizations in a State College." The Ramapo Journal of Law and Society
(2014).

\bibitem{Oak} Oakland, William H. ``Congestion, public goods and welfare."
Journal of Public Economics 1.3-4 (1972): 339-357.

\bibitem{OR} {\O }stvik, Kristina, and Floyd Rudmin. ``Bullying and hazing
among Norwegian army soldiers: Two studies of prevalence, context, and
cognition." Military psychology 13.1 (2001): 17-39.

\bibitem{PPU} Page, Talbot, Louis Putterman, and Bulent Unel. ``Voluntary
association in public goods experiments: Reciprocity, mimicry and
efficiency." The Economic Journal 115.506 (2005): 1032-1053.

\bibitem{Rear} Reardon, Sean F. ``School segregation and racial academic
achievement gaps." RSF: The Russell Sage Foundation Journal of the Social
Sciences 2.5 (2016): 34-57.

\bibitem{RS} Ruffle, Bradley J., and Richard Sosis. ``Does it pay to pray?
Costly ritual and cooperation." The BE Journal of Economic Analysis \&
Policy 7.1 (2007).

\bibitem{SK} Skaperdas, Stergios. ``Contest success functions." Economic
theory 7.2 (1996): 283-290.

\bibitem{Sol} Soler, Montserrat. ``Costly signaling, ritual and cooperation:
evidence from Candombl{\'e}, an Afro-Brazilian religion." Evolution and
Human Behavior 33.4 (2012): 346-356.

\bibitem{Sos} Sosis, Richard. ``The adaptive value of religious ritual:
Rituals promote group cohesion by requiring members to engage in behavior
that is too costly to fake." American scientist 92.2 (2004): 166-172.

\bibitem{SKB} Sosis, Richard, Howard C. Kress, and James S. Boster. ``Scars
for war: Evaluating alternative signaling explanations for cross-cultural
variance in ritual costs." Evolution and Human Behavior 28.4 (2007): 234-247.

\bibitem{Vig} Vigil, James. ``Street baptism: Chicano gang initiation."
Human Organization 55.2 (1996): 149-153.

\bibitem{Win} Winslow, Donna. ``Rites of passage and group bonding in he
Canadian Airborne." Armed Forces \& Society 25.3 (1999): 429-457.
\end{thebibliography}
\end{document}